  \newtheorem{thm}{Theorem}
  \newtheorem{example}{Example}
  \newtheorem{cor}{Corollary}
\begin{document}

\title{Quantization Bounds on Grassmann Manifolds of Arbitrary Dimensions
and MIMO Communications with Feedback$^{^{*}}$}

\author{\authorblockN{Wei Dai, Youjian Liu} \authorblockA{Dept. of Electrical and Computer Eng.\\ University of Colorado at Boulder\\ Boulder CO 80303, USA\\ Email: dai@colorado.edu, eugeneliu@ieee.org} \and \authorblockN{Brian Rider} \authorblockA{Math Department\\ University of Colorado at Boulder\\ Boulder CO 80303, USA\\ Email: brider@euclid.colorado.edu} }

\maketitle

\begin{abstract}
This paper considers the quantization problem on the Grassmann manifold
with dimension $n$ and $p$. The unique contribution is the derivation
of a closed-form formula for the volume of a metric ball in the Grassmann
manifold when the radius is sufficiently small. This volume formula
holds for Grassmann manifolds with arbitrary dimension $n$ and $p$,
while previous results are only valid for either $p=1$ or a fixed
$p$ with asymptotically large $n$. Based on the volume formula,
the Gilbert-Varshamov and Hamming bounds for sphere packings are obtained.
Assuming a uniformly distributed source and a distortion metric based
on the squared chordal distance, tight lower and upper bounds are
established for the distortion rate tradeoff. Simulation results match
the derived results. As an application of the derived quantization
bounds, the information rate of a Multiple-Input Multiple-Output (MIMO)
system with finite-rate channel-state feedback is accurately quantified
for arbitrary finite number of antennas, while previous results are
only valid for either Multiple-Input Single-Output (MISO) systems
or those with asymptotically large number of transmit antennas but
fixed number of receive antennas.
\end{abstract}
\renewcommand{\thefootnote}{\fnsymbol{footnote}} \footnotetext[1]{This work is partially supported by the Junior Faculty Development Award, University of Colorado at Boulder.} \renewcommand{\thefootnote}{\arabic{footnote}} \setcounter{footnote}{0}

\section{Introduction\label{sec:Introduction}}




The \emph{Grassmann manifold} $\mathcal{G}_{n,p}\left(\mathbb{L}\right)$
is the set of all $p$-dimensional planes (through the origin) of
the $n$-dimensional Euclidean space $\mathbb{L}^{n}$, where $\mathbb{L}$
is either $\mathbb{R}$ or $\mathbb{C}$. It forms a compact Riemann
manifold of real dimension $\beta p\left(n-p\right)$, where $\beta=1/2$
when $\mathbb{L}=\mathbb{R}/\mathbb{C}$ respectively. The Grassmann
manifold provides a useful analysis tool for multi-antenna communications
(also known as Multiple-Input Multiple-Output (MIMO) communication
systems. For non-coherent MIMO systems, sphere packings on the $\mathcal{G}_{n,p}\left(\mathbb{L}\right)$
can be viewed as a generalization of spherical codes \cite{Urbanke_IT01_Signal_Constellations,Tse_IT02_Communication_on_Grassmann_Manifold,Barg_IT02_Bounds_Grassmann_Manifold}.
For MIMO systems with finite rate channel state feedback, the quantization
of beamforming matrices is related to the quantization on the Grassmann
manifold \cite{Sabharwal_IT03_Beamforming_MIMO,Love_IT03_Grassman_Beamforming_MIMO,Dai_05_Power_onoff_strategy_design_finite_rate_feedback}.

The basic quantization problems addressed in this paper are the sphere
packing bounds and distortion rate tradeoff. A quantization is a mapping
from the $\mathcal{G}_{n,p}\left(\mathbb{L}\right)$ into a subset
of the $\mathcal{G}_{n,p}\left(\mathbb{L}\right)$, known as the code
$\mathcal{C}$. Define $\delta\triangleq\delta\left(\mathcal{C}\right)$
as the minimum distance between any two elements in $\mathcal{C}$.
The sphere packing bounds relate the size of a code and a given minimum
distance $\delta$. Assuming a randomly distributed source on the
$\mathcal{G}_{n,p}\left(\mathbb{L}\right)$ and a distortion metric,
the distortion rate tradeoff is described by either the minimum expected
distortion achievable for a given code size (distortion rate function)
or the minimum code size required to achieve a particular expected
distortion (rate distortion function).

%
{}




For the sake of applications\cite{Sabharwal_IT03_Beamforming_MIMO,Love_IT03_Grassman_Beamforming_MIMO,Dai_05_Power_onoff_strategy_design_finite_rate_feedback},
the projection Frobenius metric (i.e. \emph{chordal distance}) is
employed throughout the paper although the corresponding analysis
is also applicable to the geodesic metric \cite{Barg_IT02_Bounds_Grassmann_Manifold}.
For any two planes $P,Q\in\mathcal{G}_{n,p}\left(\mathbb{L}\right)$,
the principle angles and the chordal distance between $P$ and $Q$
are defined as follows. Let $\mathbf{u}_{1}\in P$ and $\mathbf{v}_{1}\in Q$
be the unit vectors such that $\left|\mathbf{u}_{1}^{\dagger}\mathbf{v}_{1}\right|$
is maximal. Inductively, let $\mathbf{u}_{i}\in P$ and $\mathbf{v}_{i}\in Q$
be the unit vectors such that $\mathbf{u}_{i}^{\dagger}\mathbf{u}_{j}=0$
and $\mathbf{v}_{i}^{\dagger}\mathbf{v}_{j}=0$ for all $1\leq j<i$
and $\left|\mathbf{u}_{i}^{\dagger}\mathbf{v}_{i}\right|$ is maximal.
The principle angles are defined as $\theta_{i}=\arccos\left|\mathbf{u}_{i}^{\dagger}\mathbf{v}_{i}\right|$
for $i=1,\cdots,n$ \cite{James_54_Normal_Multivariate_Analysis_Orthogonal_Group,Conway_96_PackingLinesPlanes}.
The chordal distance between $P$ and $Q$ is defined as \[
d_{c}\left(P,Q\right)\triangleq\sqrt{\sum_{i=1}^{p}\sin^{2}\theta_{i}}.\]

The invariant measure on the $\mathcal{G}_{n,p}\left(\mathbb{L}\right)$
is defined as follows. Let $O\left(n\right)/U\left(n\right)$ be the
group of $n\times n$ orthogonal/unitary matrices respectively. Let
$\mathbf{A}\in O\left(n\right)/U\left(n\right)$ and $\mathbf{B}\in O\left(n\right)/U\left(n\right)$
when $\mathbb{L}=\mathbb{R}/\mathbb{C}$ respectively. An invariant
measure $\mu$ on the $\mathcal{G}_{n,p}\left(\mathbb{L}\right)$
satisfies, for any measurable set $\mathcal{M}\subset\mathcal{G}_{n,p}\left(\mathbb{L}\right)$
and arbitrarily chosen $\mathbf{A}$ and $\mathbf{B}$, \[
\mu\left(\mathbf{A}\mathcal{M}\right)=\mu\left(\mathcal{M}\right)=\mu\left(\mathcal{M}\mathbf{B}\right).\]
 The invariant measure defines the uniform distribution on the $\mathcal{G}_{n,p}\left(\mathbb{L}\right)$
\cite{James_54_Normal_Multivariate_Analysis_Orthogonal_Group}.

With a metric and a measure defined on the $\mathcal{G}_{n,p}\left(\mathbb{L}\right)$,
there are several bounds well known for sphere packings. Let $\delta$
be the minimum distance between any two elements of a code $\mathcal{C}$
and $B\left(\delta\right)$ be the metric ball of radius $\delta$
in the $\mathcal{G}_{n,p}\left(\mathbb{L}\right)$. If $K$ is any
number such that $K\mu\left(B\left(\delta\right)\right)<1$, then
there exists a code $\mathcal{C}$ of size $K+1$ and minimum distance
$\delta$. This principle is called as the \emph{Gilbert-Varshamov}
lower bound \cite{Barg_IT02_Bounds_Grassmann_Manifold}, i.e. \begin{equation}
\left|\mathcal{C}\right|>\frac{1}{\mu\left(B\left(\delta\right)\right)}.\label{eq:GV_bd}\end{equation}
On the other hand, $\left|\mathcal{C}\right|\mu\left(B\left(\delta/2\right)\right)\leq1$
for any code $\mathcal{C}$. The \emph{Hamming} upper bound captures
this fact as\cite{Barg_IT02_Bounds_Grassmann_Manifold}\begin{equation}
\left|\mathcal{C}\right|\leq\frac{1}{\mu\left(B\left(\delta/2\right)\right)}.\label{eq:Hamming_bd}\end{equation}
These two bounds relate the code size and a given minimum distance
$\delta$.

Distortion rate function gives another important property of quantization.
Assume that $Q$ is a random plane uniformly distributed on the $\mathcal{G}_{n,p}\left(\mathbb{L}\right)$
and a distortion metric defined by the squared chordal distance $d_{c}^{2}$.
The average distortion of a given $\mathcal{C}$ is \begin{equation}
D\left(\mathcal{C}\right)\triangleq E_{Q}\left[\underset{P\in\mathcal{C}}{\min}\; d_{c}^{2}\left(P,Q\right)\right].\label{eq:distortion_def}\end{equation}
The distortion rate function gives the minimum average distortion
for a given codebook size $K$, i.e.\begin{equation}
D^{*}\left(K\right)=\underset{\mathcal{C}:\left|\mathcal{C}\right|=K}{\inf}\; D\left(\mathcal{C}\right).\label{eq:rate_distortion_def}\end{equation}


There are several papers addressing quantization problems in the Grassmann
manifold. The exact volume formula for a $B\left(\delta\right)$ in
the $\mathcal{G}_{n,p}\left(\mathbb{C}\right)$ where $p=1$ is derived
in \cite{Sabharwal_IT03_Beamforming_MIMO}. An asymptotic volume formula
for a $B\left(\delta\right)$ in the $\mathcal{G}_{n,p}\left(\mathbb{L}\right)$,
where $p\geq1$ is fixed and $n$ approaches infinity, is derived
in \cite{Barg_IT02_Bounds_Grassmann_Manifold}. Based on those volume
formulas, the corresponding sphere packing bounds are developed in
\cite{Love_IT03_Grassman_Beamforming_MIMO,Barg_IT02_Bounds_Grassmann_Manifold}.
Besides the sphere packing bounds, the rate distortion tradeoff is
also treated in \cite{Heath_ICASSP05_Quantization_Grassmann_Manifold},
where approximations to the distortion rate function are derived by
the sphere packing bounds. However, the derived approximations are
based on the volume formulas \cite{Barg_IT02_Bounds_Grassmann_Manifold,Sabharwal_IT03_Beamforming_MIMO}
only valid for some special choices of $n$ and $p$, i.e. either
$p=1$ or fixed $p\geq1$ with asymptotic large $n$.

This paper derives quantization bounds for the Grassmann manifold
with arbitrary $n$ and $p$ when the code size is large. An explicit
volume formula for a metric ball in the $\mathcal{G}_{n,p}\left(\mathbb{L}\right)$
is derived when the radius is sufficiently small. Based on the derived
volume formula, the sphere packing bounds are obtained. The distortion
rate tradeoff is also characterized by establishment of tight lower
and upper bounds. Simulation results match the derived bounds. As
an application of the derived quantization bounds, the information
rate of a MIMO system with finite rate channel state feedback is accurately
quantified for abitrary finite number of antennas for the first time,
while previous results are only valid for either Multiple-Input Single-Output
(MISO) systems or those with asymptotically large number of transmit
antennas but fixed number of receive antennas.

\section{Metric Balls in the $\mathcal{G}_{n,p}\left(\mathbb{L}\right)$\label{sec:Spheres}}



In this section, an explicit volume formula for a metric ball $B\left(\delta\right)$
in the $\mathcal{G}_{n,p}\left(\mathbb{L}\right)$ is derived. The
volume formula is essential for the quantization bounds in Section
\ref{sec:Quantization-Bounds}. 

The volume calculation depends on the relationship between the measure
and the metric defined on the $\mathcal{G}_{n,p}\left(\mathbb{L}\right)$.
For the invariant measure $\mu$ and the chordal distance $d_{c}$,
the volume of a metric ball $B\left(\delta\right)$ can be calculated
by\begin{equation}
\mu\left(B\left(\delta\right)\right)=\underset{\underset{\frac{\pi}{2}\geq\theta_{1}\geq\cdots\geq\theta_{p}\geq0}{\sqrt{\sum_{i=1}^{p}\sin^{2}\theta_{i}}\leq\delta}}{\int\cdots\int}\; d\mu_{\mathbf{\theta}},\label{eq:actual_volume}\end{equation}
where $\theta_{1},\cdots,\theta_{p}$ are the principle angles and
the differential form $d\mu_{\mathbf{\theta}}$ is given in \cite{James_54_Normal_Multivariate_Analysis_Orthogonal_Group,Adler_2001_Integrals_Grassmann}. 

The following theorem expresses the volume formula as an exponentiation
of the radius $\delta$.

\begin{thm}
\label{thm:Volume_formula}Let $B\left(\delta\right)$ be a ball of
radius $\delta$ in $\mathcal{G}_{n,p}\left(\mathbb{L}\right)$. When
$\delta\leq1$, \begin{equation}
\mu\left(B\left(\delta\right)\right)=\left\{ \begin{array}{ll}
c_{n,p,\beta}\delta^{p\left(n-p\right)}\left(1+o\left(\delta\right)\right) & \mathrm{if}\;\mathbb{L}=\mathbb{R}\\
c_{n,p,\beta}\delta^{2p\left(n-p\right)} & \mathrm{if}\;\mathbb{L}=\mathbb{C}\end{array}\right.,\label{eq:simplified_volume_formula}\end{equation}
where $\beta=1/2$ when $\mathbb{L}=\mathbb{R}/\mathbb{C}$ respectively
and $c_{n,p,\beta}$ is a constant determined by $n$, $p$ and $\beta$.
When $\mathbb{L}=\mathbb{C}$, $c_{n,p,2}$ can be explicitly calculated\begin{equation}
c_{n,p,2}=\left\{ \begin{array}{ll}
\frac{1}{\left(np-p^{2}\right)!}\prod_{i=1}^{p}\frac{\left(n-i\right)!}{\left(p-i\right)!} & \mathrm{if}\;0<p\leq\frac{n}{2}\\
\frac{1}{\left(np-p^{2}\right)!}\prod_{i=1}^{n-p}\frac{\left(n-i\right)!}{\left(n-p-i\right)!} & \mathrm{if}\;\frac{n}{2}\leq p\leq n\end{array}\right..\label{eq:constant-complex-manifold}\end{equation}
When $\mathbb{L}=\mathbb{R}$, $c_{n,p,1}$ is given by \begin{eqnarray}
 &  & c_{n,p,1}=\nonumber \\
 &  & \left\{ \begin{array}{ll}
\frac{V_{n,p,1}}{2^{p}}\underset{\underset{x_{1}\geq\cdots\geq x_{p}\geq0}{\sum_{i=1}^{p}x_{i}\leq1}}{\int\cdots\int}\left[\left|\prod_{i<j}^{p}\left(x_{i}-x_{j}\right)\right|\right.\\
\quad\quad\left.\prod_{i=1}^{p}\left(x_{i}^{\frac{1}{2}\left(n-2p+1\right)-1}dx_{i}\right)\right] & \mathrm{if}\;0<p\leq\frac{n}{2}\\
\frac{V_{n,n-p,1}}{2^{n-p}}\underset{\underset{x_{1}\geq\cdots\geq x_{n-p}\geq0}{\sum_{i=1}^{n-p}x_{i}\leq1}}{\int\cdots\int}\left[\left|\prod_{i<j}^{p}\left(x_{i}-x_{j}\right)\right|\right.\\
\quad\quad\left.\prod_{i=1}^{n-p}\left(x_{i}^{\frac{1}{2}\left(2p-n+1\right)-1}dx_{i}\right)\right] & \mathrm{if}\;\frac{n}{2}\leq p\leq n\end{array}\right.,\label{eq:constant_in_volume}\end{eqnarray}
where \[
V_{n,p,1}=\prod_{i=1}^{p}\frac{A^{2}\left(p-i+1\right)A\left(n-p-i+1\right)}{2A\left(n-i+1\right)}\]
and \[
A\left(p\right)=\frac{2\pi^{p/2}}{\Gamma\left(\frac{p}{2}\right)}.\]

\end{thm}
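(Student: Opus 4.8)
The plan is to evaluate the integral (\ref{eq:actual_volume}) directly, exploiting the fact that a small metric ball forces every principal angle $\theta_i$ to be close to $0$, together with the explicit form of the radial differential $d\mu_\theta$. \emph{First}, from \cite{James_54_Normal_Multivariate_Analysis_Orthogonal_Group,Adler_2001_Integrals_Grassmann} I would record that, after integrating out the angular (group) variables in the generalized CS decomposition relative to a fixed reference plane — this is what produces the normalization constant $V_{n,p,\beta}$ expressed through the sphere-area function $A(\cdot)$ — the invariant measure has radial part, for $0<p\le n/2$,
\[
d\mu_\theta = V_{n,p,\beta}\,\prod_{i<j}^p\left|\cos^2\theta_i-\cos^2\theta_j\right|^\beta\prod_{i=1}^p(\sin\theta_i)^{\beta(n-2p+1)-1}(\cos\theta_i)^{\beta-1}\,d\theta_i .
\]
A quick check against $\mathcal{G}_{n,1}(\mathbb{R})=\mathbb{RP}^{n-1}$ and $\mathcal{G}_{n,1}(\mathbb{C})=\mathbb{CP}^{n-1}$ (the latter giving $\mu(B(\delta))=\delta^{2(n-1)}$ as in \cite{Sabharwal_IT03_Beamforming_MIMO}) pins down the constant and the exponents.

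\emph{Second}, substitute $x_i=\sin^2\theta_i$, so that $d\theta_i=\tfrac12 x_i^{-1/2}(1-x_i)^{-1/2}\,dx_i$ and $\cos^2\theta_i-\cos^2\theta_j=x_j-x_i$, and the constraint $\sqrt{\sum\sin^2\theta_i}\le\delta$ becomes $\sum_i x_i\le\delta^2$ over the ordered simplex $x_1\ge\cdots\ge x_p\ge0$. In the complex case $\beta=2$ the half-integer powers cancel exactly: the integrand reduces to $V_{n,p,2}\,2^{-p}\prod_{i<j}(x_i-x_j)^2\prod_i x_i^{\,n-2p}\,dx_i$, a genuine homogeneous polynomial of degree $p(n-p)-p$ times $\prod dx_i$, with no surviving $(1-x_i)$ factor. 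Rescaling $x_i=\delta^2 y_i$ then extracts $\delta^{2p(n-p)}$ \emph{exactly}, leaving a Selberg-type (Dirichlet) integral over $\{\sum y_i\le1,\ y_1\ge\cdots\ge y_p\ge0\}$ whose evaluation yields the factorial product in (\ref{eq:constant-complex-manifold}) for $0<p\le n/2$. In the real case $\beta=1$ the same substitution leaves a factor $\prod_i(1-x_i)^{-1/2}$; on the domain $\sum x_i\le\delta^2$ this equals $1+O(\delta^2)$, so replacing it by $1$ and rescaling gives $c_{n,p,1}\delta^{p(n-p)}$ with $c_{n,p,1}$ exactly the displayed integral $\tfrac{V_{n,p,1}}{2^p}\int\cdots$, while the discarded factor contributes only a relative error $O(\delta^2)=o(\delta)$. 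This is precisely why (\ref{eq:simplified_volume_formula}) carries $(1+o(\delta))$ for $\mathbb{L}=\mathbb{R}$ but is exact for $\mathbb{L}=\mathbb{C}$.

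\emph{Third}, for the range $p>n/2$ use the diffeomorphism $P\mapsto P^{\perp}$ of $\mathcal{G}_{n,p}(\mathbb{L})$ onto $\mathcal{G}_{n,n-p}(\mathbb{L})$: it preserves the invariant measure by uniqueness, and since $d_c^2(P,Q)=\tfrac12\|\Pi_P-\Pi_Q\|_F^2$ with $\Pi_{P^{\perp}}-\Pi_{Q^{\perp}}=-(\Pi_P-\Pi_Q)$, it is an isometry for the chordal distance. Hence $\mu(B(\delta))$ on $\mathcal{G}_{n,p}$ equals $\mu(B(\delta))$ on $\mathcal{G}_{n,n-p}$, and applying the previous step with $p$ replaced by $n-p$ produces the second branches of (\ref{eq:constant-complex-manifold}) and (\ref{eq:constant_in_volume}).

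I expect two places to need care. The first is making the normalization $V_{n,p,\beta}$ and the exponents in $d\mu_\theta$ exactly consistent with the conventions of \cite{James_54_Normal_Multivariate_Analysis_Orthogonal_Group,Adler_2001_Integrals_Grassmann} — bookkeeping with Stiefel-manifold volumes and $\Gamma$-functions, but where factor errors easily creep in. The second, and the genuine obstacle, is the closed-form evaluation of the complex $y$-integral $\int_{\sum y_i\le1,\ y\ \mathrm{ordered}}\prod_{i<j}(y_i-y_j)^2\prod_i y_i^{\,n-2p}\,dy$; after symmetrizing (dividing by $p!$) it is a Selberg/Dirichlet integral whose value collapses to the factorials in (\ref{eq:constant-complex-manifold}) exactly because $\beta=2$ makes all the relevant $\Gamma$-arguments integers — for $\beta=1$ the analogous integral does not reduce to factorials, which is why $c_{n,p,1}$ is left in integral form.
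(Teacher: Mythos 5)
The paper omits its own proof (``The proof of Theorem~\ref{thm:Volume_formula} is not included due to the length limit''), so there is nothing in the source to compare you against; what follows is an assessment of your sketch on its own terms.

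Your route is the natural one and, at the level of structure, it is correct. The James/Muirhead-type radial density on $\mathcal{G}_{n,p}(\mathbb{L})$ for $p\le n/2$ has exactly the Jacobi form you write; the substitution $x_i=\sin^2\theta_i$ turns $\bigl|\cos^2\theta_i-\cos^2\theta_j\bigr|$ into $|x_i-x_j|$ and the $p$-fold $\tfrac12 x_i^{-1/2}(1-x_i)^{-1/2}$ factor produces the $2^{-p}$ and the exponent $\tfrac{\beta}{2}(n-2p+1)-1$ on $x_i$ together with $(1-x_i)^{\beta/2-1}$. For $\beta=2$ the latter collapses to $1$, the integrand becomes a homogeneous polynomial of degree $p(n-p)-p$ plus $p$ from $\prod dx_i$, and the dilation $x=\delta^2 y$ extracts $\delta^{2p(n-p)}$ exactly on the whole range $\delta\le1$ (since $\sum x_i\le\delta^2\le1$ forces each $x_i\le1$, the angular constraint is inactive). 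For $\beta=1$ the surviving $\prod(1-x_i)^{-1/2}=1+O(\delta^2)=1+o(\delta)$ is the only approximation, which accounts for the asymmetry in~(\ref{eq:simplified_volume_formula}). The $p\leftrightarrow n-p$ duality via $P\mapsto P^\perp$ is correct both for the measure (uniqueness of the invariant probability measure) and for $d_c$ (since $\Pi_{P^\perp}-\Pi_{Q^\perp}=-(\Pi_P-\Pi_Q)$), and it cleanly yields the second branches. Your sanity check on $\mathcal{G}_{n,1}(\mathbb{C})$ pins down $V_{n,1,2}=2(n-1)$ and recovers $c_{n,1,2}=1$, consistent with~(\ref{eq:constant-complex-manifold}).

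The two things you flag as ``needing care'' are in fact the entire computational content of the theorem, and as written they are genuine gaps rather than routine bookkeeping. First, the normalization $V_{n,p,\beta}$ must be derived (not just cited) and shown to agree with the $A(\cdot)$-product in the statement; this is a Stiefel-volume quotient calculation with enough room for factor errors that a one-line appeal to references does not close it. Second, and more importantly, the claimed closed form for $c_{n,p,2}$ requires you to actually evaluate
\[
\frac{V_{n,p,2}}{2^p}\int_{\substack{\sum_i y_i\le1\\ y_1\ge\cdots\ge y_p\ge0}}\ \prod_{i<j}(y_i-y_j)^2\ \prod_{i=1}^p y_i^{\,n-2p}\,dy_i
\]
and show it equals $\frac{1}{(np-p^2)!}\prod_{i=1}^p\frac{(n-i)!}{(p-i)!}$. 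This is a Selberg--Dirichlet integral (type-$A$ Selberg with the $(1-\sum y_i)$ factor absent, i.e.\ its exponent equal to $1$), and its reduction to factorials at $\beta=2$ is exactly the kind of step that must be carried out explicitly, including the $1/p!$ from symmetrization and the cancellation against $V_{n,p,2}$. Until that evaluation is done, the theorem's constant is asserted rather than proved. So: right decomposition, right handling of the exact-vs-asymptotic dichotomy, right duality trick, but the proof as submitted stops short of the two computations that the theorem is actually claiming.
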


The proof of Theorem \ref{thm:Volume_formula} is not included due
to the length limit.

Theorem \ref{thm:Volume_formula} provides an explicit volume approximation
for real Grassmann manifolds and an exact volume formula for complex
Grassmann manifolds when $\delta\leq1$. Simulations show that this
approximation remains good for relatively large $\delta$ (Fig. \ref{cap:volume_in_Grassmann}).

Theorem \ref{thm:Volume_formula} is consistent with the previous
results in \cite{Sabharwal_IT03_Beamforming_MIMO} and \cite{Barg_IT02_Bounds_Grassmann_Manifold},
which pertain to special choices of $n$ and $p$ and are stated as
follows.

\begin{example}
\label{exa:G_n_1}Consider the volume formula for a $B\left(\delta\right)$
in the $\mathcal{G}_{n,p}\left(\mathbb{C}\right)$ where $p=1$. It
has been shown in \cite{Sabharwal_IT03_Beamforming_MIMO} that \[
\mu\left(B\left(\delta\right)\right)=\delta^{2\left(n-1\right)}.\]
Theorem \ref{thm:Volume_formula} is consistent with it where $\beta=2$
and $c_{n,1,2}=1$. 
\end{example}

\begin{example}
\label{exa:G_asymptotic_n}When $p$ is fixed and $n\rightarrow+\infty$,
the asymptotic volume formula for a $B\left(\delta\right)$ is given
by Barg \cite{Barg_IT02_Bounds_Grassmann_Manifold} as\begin{equation}
\mu\left(B\left(\delta\right)\right)=\left(\frac{\delta}{\sqrt{p}}\right)^{\beta np+o\left(n\right)}.\label{eq:Barg-formula}\end{equation}
On the other hand, Theorem \ref{thm:Volume_formula} contains an asymptotic
formula for $\mathbb{L}=\mathbb{C}$, $\delta\leq1$, fixed $p$ and
asymptotically large $n$ in the form \[
\mu\left(B\left(\delta\right)\right)=\left(\frac{\delta}{\sqrt{p}}\right)^{2p\left(n-p\right)+o\left(n\right)}.\]
This follows from (\ref{eq:constant-complex-manifold}) and Stirling's
approximation. Therefore, Theorem \ref{thm:Volume_formula} is consistent
with Barg's formula (\ref{eq:Barg-formula}).
\end{example}

Importantly though, Theorem \ref{thm:Volume_formula} is distinct
from the previous results of \cite{Sabharwal_IT03_Beamforming_MIMO}
and \cite{Barg_IT02_Bounds_Grassmann_Manifold} in that it holds for
arbitrary $n$ and $p$, $1\leq p\leq n$.

\begin{figure}
\subfigure[Real Grassmann manifolds]{\includegraphics[clip,scale=0.5]{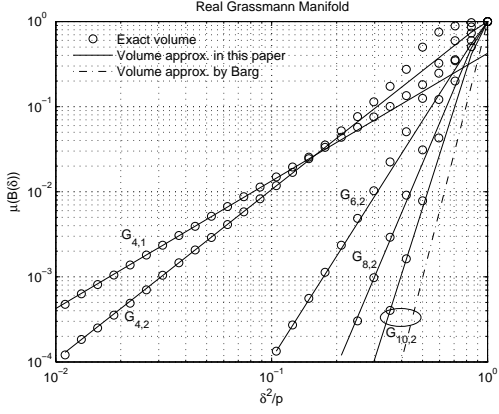}}

\subfigure[Complex Grassmann manifolds]{\includegraphics[clip,scale=0.5]{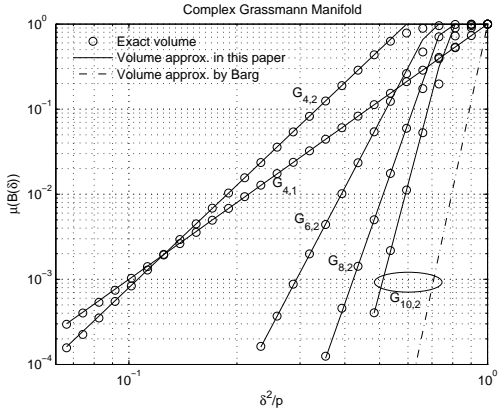}}

\caption{\label{cap:volume_in_Grassmann}The volume of a metric ball in the
Grassmann manifold}
\end{figure}

Fig. \ref{cap:volume_in_Grassmann} compares the exact volume of a
metric ball (\ref{eq:actual_volume}) and the volume evaluated by
(\ref{eq:simplified_volume_formula}). For the volume approximation
$c_{n,p,\beta}\delta^{\beta p\left(n-p\right)}$, the constant $c_{n,p,\beta}$
is calculated either by (\ref{eq:constant-complex-manifold}) if $\mathbb{L}=\mathbb{C}$
or by Monte Carlo numerical integral of (\ref{eq:constant_in_volume})
if $\mathbb{L}=\mathbb{R}$. Simulations show that the volume approximation
is close to the exact volume when the radius of the metric ball is
not large. We also compare our approximation with Barg's approximation
$\left(\delta/\sqrt{p}\right)^{\beta np}$ for $n=10$ and $p=2$
case. Simulations show that the exact volume and Barg's approximation
may not be in the same order while the approximation in this paper
is more accurate.

\section{Quantization Bounds\label{sec:Quantization-Bounds}}





Based on the volume formula given in Theorem \ref{thm:Volume_formula},
the sphere packing bounds are derived and the rate distortion tradeoff
is characterized in this section.

The Gilbert-Varshamov and Hamming bounds on the $\mathcal{G}_{n,p}\left(\mathbb{L}\right)$
are given in the following corollary.

\begin{cor}
\label{cor:packing_bounds}When $\delta$ is sufficiently small, there
exists a code in $\mathcal{G}_{n,p}\left(\mathbb{L}\right)$ with
size $K$ and the minimum distance $\delta$ such that \[
c_{n,p,\beta}^{-1}\delta^{-\beta p\left(n-p\right)}\lesssim K.\]
For any code with the minimum distance $\delta$, \[
K\lesssim c_{n,p,\beta}^{-1}\left(\frac{\delta}{2}\right)^{-\beta p\left(n-p\right)}.\]
\emph{Here and throughtout, the symbol $\lesssim$ indicates that
the inequality holds up to $\left(1+o\left(1\right)\right)$ error.} 
\end{cor}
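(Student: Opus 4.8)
The plan is to substitute the volume estimate of Theorem~\ref{thm:Volume_formula} into the two packing inequalities already recorded in the introduction --- the Gilbert--Varshamov bound \eqref{eq:GV_bd} and the Hamming bound \eqref{eq:Hamming_bd} --- and then to collect the resulting multiplicative errors into a single $(1+o(1))$ factor as $\delta\to 0$ with $n$ and $p$ fixed. The one preliminary observation I would make is that Theorem~\ref{thm:Volume_formula} can be written uniformly, for both $\mathbb{L}=\mathbb{R}$ and $\mathbb{L}=\mathbb{C}$, as
\[
\mu\!\left(B(\delta)\right)=c_{n,p,\beta}\,\delta^{\beta p(n-p)}\bigl(1+o(1)\bigr)\qquad(\delta\to 0),
\]
since the complex formula is exact and the real formula's $1+o(\delta)$ factor is in particular $1+o(1)$; the exponent $\beta p(n-p)$ is exactly the real dimension of the manifold and matches the one in the corollary.

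For the achievability (lower) bound I would apply the Gilbert--Varshamov principle directly: there exists a code $\mathcal{C}$ with minimum distance $\delta$ and $|\mathcal{C}|>1/\mu(B(\delta))$, so substituting the displayed estimate yields a code of size exceeding $c_{n,p,\beta}^{-1}\delta^{-\beta p(n-p)}\bigl(1+o(1)\bigr)$, which is precisely the asserted $c_{n,p,\beta}^{-1}\delta^{-\beta p(n-p)}\lesssim K$. For the converse (upper) bound I would use that the balls $B(\delta/2)$ centered at the codewords of any code with minimum distance $\delta$ are pairwise disjoint, so $|\mathcal{C}|\,\mu(B(\delta/2))\le 1$; substituting the estimate at radius $\delta/2$ --- which is legitimate because $\delta/2\le\delta\le 1$ --- gives
\[
K\le\frac{1}{\mu(B(\delta/2))}=c_{n,p,\beta}^{-1}\left(\frac{\delta}{2}\right)^{-\beta p(n-p)}\bigl(1+o(1)\bigr),
\]
which is the second claim.

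There is no genuinely hard step here, since the whole analytic weight sits in Theorem~\ref{thm:Volume_formula}; the only point requiring care is the error bookkeeping. The mild thing worth spelling out is that the hypothesis ``$\delta$ sufficiently small'' is exactly what makes the $o(\delta)$ term (real case) and the range restriction $\delta\le 1$ (complex case) applicable at \emph{both} radii $\delta$ and $\delta/2$, after which every correction factor is absorbed into the symbol $\lesssim$ as defined in the statement. I would close by remarking that the two bounds sandwich the optimal code size between $c_{n,p,\beta}^{-1}\delta^{-\beta p(n-p)}$ and $2^{\beta p(n-p)}c_{n,p,\beta}^{-1}\delta^{-\beta p(n-p)}$, pinning it down up to the explicit constant $2^{\beta p(n-p)}$, which is the natural lead-in to the distortion--rate estimates that follow.
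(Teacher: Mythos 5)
Your proof is correct and follows the same route as the paper: substitute the volume estimate of Theorem~\ref{thm:Volume_formula} into the Gilbert--Varshamov bound \eqref{eq:GV_bd} and the Hamming bound \eqref{eq:Hamming_bd}, absorbing the multiplicative errors into the $\lesssim$ notation. The extra bookkeeping you supply (unifying the real and complex cases, checking applicability at both radii $\delta$ and $\delta/2$) is exactly the care the paper leaves implicit.
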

\begin{proof}
The corollary follows by substituting the volume formula (\ref{eq:simplified_volume_formula})
into (\ref{eq:GV_bd}) and (\ref{eq:Hamming_bd}).
\end{proof}

The distortion rate function is characterized by establishing tight
lower and upper bounds.

\begin{thm}
\label{thm:rate_distortion_bounds}Let $t=\beta p\left(n-p\right)$
be the number of the real dimensions of the Grassmann manifold $\mathcal{G}_{n,p}\left(\mathbb{L}\right)$.
When $K$ is sufficiently large, the distortion rate function is bounded
by \begin{equation}
\frac{t}{t+2}\left(c_{n,p,\beta}K\right)^{-\frac{2}{t}}\lesssim D^{*}\left(K\right)\lesssim\frac{2\Gamma\left(\frac{2}{t}\right)}{t}\left(c_{n,p,\beta}K\right)^{-\frac{2}{t}}.\label{eq:DRF_bounds}\end{equation}

\end{thm}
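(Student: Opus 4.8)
The plan is to prove the two inequalities in \eqref{eq:DRF_bounds} by separate arguments: the lower bound by a Voronoi-region / rearrangement estimate that is valid for \emph{every} code of size $K$, and the upper bound by averaging the distortion over a random codebook.

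For the lower bound, fix any code $\mathcal{C}=\{P_1,\dots,P_K\}$ and let $V_k$ be the Voronoi cell of $P_k$, so the $V_k$ partition $\mathcal{G}_{n,p}(\mathbb{L})$ up to a null set and $\sum_k\mu(V_k)=1$ since $\mu$ is the normalized invariant measure. Nearest-neighbour encoding gives the exact identity $D(\mathcal{C})=\sum_k\int_{V_k}d_c^2(P_k,Q)\,d\mu(Q)$. The first key step is a bathtub (rearrangement) inequality: since the sublevel sets of $Q\mapsto d_c^2(P_k,Q)$ are exactly the metric balls centered at $P_k$, among all sets of a fixed measure $m$ the integral of $d_c^2(P_k,\cdot)$ is smallest on such a ball, whence $\int_{V_k}d_c^2(P_k,\cdot)\,d\mu\ge\phi\bigl(\mu(V_k)\bigr)$, where $\phi(m):=\int_{B(r(m))}d_c^2\,d\mu$ and $r(m)$ is defined by $\mu(B(r(m)))=m$. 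Using the elementary identity $\int_{B(r)}d_c^2\,d\mu=r^2\mu(B(r))-\int_0^{r^2}\mu(B(\sqrt s))\,ds$ one finds $\phi'(m)=r(m)^2$, which is nondecreasing, so $\phi$ is convex; Jensen's inequality then yields $D(\mathcal{C})\ge\sum_k\phi(\mu(V_k))\ge K\phi(1/K)$ for every code, hence $D^*(K)\ge K\phi(1/K)$. Finally, for large $K$ the radius $r(1/K)\to0$, so Theorem \ref{thm:Volume_formula} applies and gives $r(1/K)=(c_{n,p,\beta}K)^{-1/t}(1+o(1))$ and, after integrating the volume formula, $\phi(1/K)=\frac{t}{t+2}c_{n,p,\beta}^{-2/t}K^{-(t+2)/t}(1+o(1))$; multiplying by $K$ produces the stated lower bound.

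For the upper bound, take $\mathcal{C}$ to consist of $K$ i.i.d.\ uniformly distributed points, so $D^*(K)\le E_{\mathcal{C}}[D(\mathcal{C})]$. Writing $D(\mathcal{C})=\int_0^{\infty}\Pr_Q[\min_k d_c^2(P_k,Q)>s]\,ds$ and using independence of the codepoints together with the invariance of $\mu$ (so that $\Pr[d_c(P_k,Q)>\sqrt s\mid Q]=1-\mu(B(\sqrt s))$ does not depend on $Q$) gives the clean identity $E_{\mathcal{C}}[D(\mathcal{C})]=\int_0^{p}\bigl(1-\mu(B(\sqrt s))\bigr)^K\,ds$, the upper limit being $p$ because $\sqrt p$ is the maximal chordal distance. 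Split this integral at a threshold $\varepsilon_K\to0$ chosen so that $K\mu(B(\sqrt{\varepsilon_K}))\to\infty$ (e.g.\ $\varepsilon_K=K^{-1/t}$). The contribution of $[\varepsilon_K,p]$ is at most $p\bigl(1-\mu(B(\sqrt{\varepsilon_K}))\bigr)^K$, which is exponentially small and hence $o(K^{-2/t})$. On $[0,\varepsilon_K]$ substitute the volume formula of Theorem \ref{thm:Volume_formula}, use $1-x\le e^{-x}$, and apply the change of variables $u=c_{n,p,\beta}K s^{t/2}$ to get $\int_0^{\varepsilon_K}\bigl(1-\mu(B(\sqrt s))\bigr)^K\,ds\le\frac{2}{t}(c_{n,p,\beta}K)^{-2/t}\int_0^{\infty}e^{-u}u^{2/t-1}\,du\,(1+o(1))=\frac{2\Gamma(2/t)}{t}(c_{n,p,\beta}K)^{-2/t}(1+o(1))$. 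Combining the two pieces yields the upper bound.

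The routine parts are the Voronoi decomposition and the two integral identities just invoked. I expect the main obstacle to be the careful bookkeeping of the error terms coming from Theorem \ref{thm:Volume_formula} — in particular the $o(\delta)$ factor in the real case — which must be shown to survive only as a $(1+o(1))$ multiplicative error after being integrated against $s^{t/2}\,ds$ near $s=0$ and after the Laplace-type change of variables; this is precisely what dictates the choice of a vanishing-but-not-too-fast threshold $\varepsilon_K$ and the verification that the discarded range contributes only $o(K^{-2/t})$. A secondary point worth isolating is the convexity of $\phi$, since it is this fact — equivalently, that equal-measure balls are the optimal Voronoi configuration — that pins down the constant $\frac{t}{t+2}$ in the lower bound.
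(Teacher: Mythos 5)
Your proposal follows essentially the same route as the paper: a random-coding/extreme-order-statistics computation of $\mathrm{E}[W_K]$ for the upper bound, and an ``ideal equal-ball quantizer'' argument for the lower bound, each combined with the small-$\delta$ volume formula of Theorem~\ref{thm:Volume_formula}. Your bathtub-plus-Jensen treatment of $\phi(m)=\int_{B(r(m))}d_c^2\,d\mu$ rigorously establishes the paper's heuristic claim that the (possibly non-existent) partition into $K$ equal-measure balls minimizes distortion, which is the only part the paper's sketch leaves unjustified.
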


Due to the length limit, we only sketch the proof here. The lower
bound is proved by an optimization argument. The key is to construct
an ideal quantizer, which may not exist, to minimize the distortion.
Suppose that there exists $K$ metric balls of the same radius $\delta_{0}$
covering the whole $\mathcal{G}_{n,p}\left(\mathbb{L}\right)$ completely
without any overlap. Then the quantizer which maps each of those balls
into its center gives the minimum distortion among all quantizers.
Of course such a covering may not exist, provding a lower bound of
the distortion rate function.

The upper bound is derived by characterizing the average distortion
of the ensemble of random codes. Define a random code with size $K$
as $\mathcal{C}_{\mathrm{rand}}=\left\{ P_{1},P_{2},\cdots,P_{K}\right\} $
where $P_{i}$'s are independently drawn from the uniform distribution
on the $\mathcal{G}_{n,p}\left(\mathbb{L}\right)$. For any given
$Q\in\mathcal{G}_{n,p}\left(\mathbb{L}\right)$, define $X_{i}\triangleq d_{c}^{2}\left(P_{i},Q\right)$
and $W_{K}\triangleq\min\left(X_{1},\cdots,X_{K}\right)=\underset{P_{i}\in\mathcal{C}_{\mathrm{rand}}}{\min}\; d_{c}^{2}\left(P_{i},Q\right)$.
Since the codewords $P_{i}$'s $1\leq i\leq K$ are independently
drawn from the uniform distribution on the $\mathcal{G}_{n,p}\left(\mathbb{L}\right)$,
$X_{i}$'s $1\leq i\leq K$ are independent and identically distributed
(\emph{i.i.d.}) random variables with the cumulative distribution
function (CDF) given by Theorem \ref{thm:Volume_formula}. According
to $X_{i}$'s CDF, the CDF of $W_{K}$ can be calculated by extreme
order statistics. We prove that for any given $Q\in\mathcal{G}_{n,p}\left(\mathbb{L}\right)$,
$K^{\frac{t}{2}}\cdot\mathrm{E}_{W_{K}}\left[W_{K}\right]$ converges
to $\frac{2\Gamma\left(\frac{2}{t}\right)}{t}c_{n,p,\beta}^{-\frac{2}{t}}$
as $K$ approaches infinity. Thus, $K^{\frac{t}{2}}\cdot\mathrm{E}_{Q}\left[\mathrm{E}_{W_{K}}\left[W_{K}\right]\right]=K^{\frac{t}{2}}\cdot\mathrm{E}_{\mathcal{C}_{\mathrm{rand}}}\left[D\left(\mathcal{C}_{\mathrm{rand}}\right)\right]$
converges to the same constant, providing an upper bound of $D^{*}\left(K\right)$.

It is worthy to point out that since the upper bound is corresponding
to the ensemble of random codes, it is often used as an approximation
to the distortion rate function in practice.

Fig. \ref{cap:DRF_bounds} compares the simulated distortion rate
function with its lower bound and upper bound in (\ref{eq:DRF_bounds}).
To simulate the distortion rate function, we use the max-min criterion
\cite{Love_IT03_Grassman_Beamforming_MIMO} to design codes and use
the minimum distortion of the designed codes as the distortion rate
function. Simulations show that the bounds in (\ref{eq:DRF_bounds})
hold for large $K$. When $K$ is relatively small, the formula (\ref{eq:DRF_bounds})
can serve as good approximations to the distortion rate function as
well. In addition, we compare our bounds with the approximation (the
{}``x'' markers) derived in \cite{Heath_ICASSP05_Quantization_Grassmann_Manifold}.
While the approximation in \cite{Heath_ICASSP05_Quantization_Grassmann_Manifold}
works for the case that $n=10$ and $p=2$ but doesn't work when $n\leq8$
and $p=2$, the bounds in (\ref{eq:DRF_bounds}) hold for arbitrary
$n$ and $p$.

\begin{figure}
\includegraphics[clip,scale=0.5]{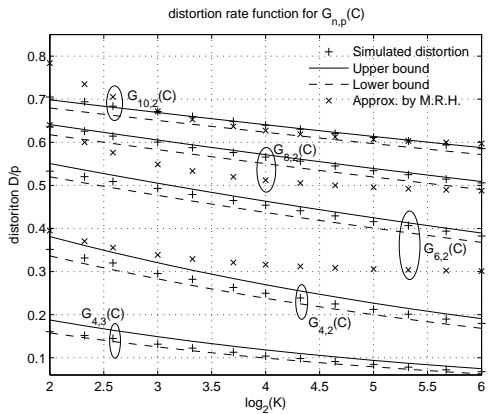}

\caption{\label{cap:DRF_bounds}Bounds on the distortion rate function}
\end{figure}

\section{Application to MIMO Systems with Finite Rate Channel State Feedback\label{sec:Application-to-MIMO}}





As an application of the derived quantization bounds on the Grassmann
manifold, this section discusses the information theoretical benefit
of finite rate channel state feedback for MIMO systems using power
on/off strategy. We will show that the benefit of the channel state
feedback can be accurately characterized by the distortion of a quantization
on the Grassmann manifold. 

The effect of finite rate feedback on MIMO systems using power on/off
strategy has been widely studied. MIMO systems with only one on-beam
are discussed in \cite{Sabharwal_IT03_Beamforming_MIMO,Love_IT03_Grassman_Beamforming_MIMO},
where the performance analysis is derived by geometric arguments in
the $\mathcal{G}_{n,1}\left(\mathbb{C}\right)$. For MIMO systems
with multiple on-beams, many works, e.g. \cite{Heath_ICASSP05_Quantization_Grassmann_Manifold,Love_SP05_Limited_feedback_unitary_precoding,Love_GlobeComm03_Limited_Feedback_Multiplexing},
employ Barg's formula (\ref{eq:Barg-formula}) for performance analysis,
which is only valid for MIMO systems with asymptotically large number
of antennas but fixed number of receive antennas. Valid for arbitrary
MIMO systems, the loss in information rate is quantified for high
SNR region in \cite{Rao_icc05_MIMO_spatial_multiplexing_limit_feedback},
which is hard to be generalized to other SNR regions. For all SNR
regimes, a formula to calculate the information rate is proposed in
\cite{Honig_MiliComm03_Asymptotic_MIMO_Limited_Feedback} by letting
the numbers of transmit and receive antennas and feedback rate approach
infinity simultaneously. But this formula overestimates the performance
in general. 

The system model of a wireless communication system with $L_{T}$
transmit antennas, $L_{R}$ receive antennas and finite rate channel
state feedback is given in Fig. \ref{cap:System-model}. The information
bit stream is encoded into the Gaussian signal vector $\mathbf{X}\in\mathbb{C}^{s\times1}$
and then multiplied by the beamforming matrix $\mathbf{P}\in\mathbb{C}^{L_{T}\times s}$
to generate the transmitted signal $\mathbf{T}=\mathbf{PX}$, where
$s$ is the dimension of the signal $\mathbf{X}$ satisfying $1\leq s\leq L_{T}$
and the beamforming matrix $\mathbf{P}$ satisfies $\mathbf{P}^{\dagger}\mathbf{P}=\mathbf{I}_{s}$.
In power on/off strategy, $\mathrm{E}\left[\mathbf{X}\mathbf{X}^{\dagger}\right]=P_{\mathrm{on}}\mathbf{I}_{s}$
where $P_{\mathrm{on}}$ is a positive constant to denote the on-power.
Assume that the channel $\mathbf{H}$ is Rayleigh flat fading, i.e.,
the entries of $\mathbf{H}$ are independent and identically distributed
(i.i.d.) circularly symmetric complex Gaussian variables with zero
mean and unit variance ($\mathcal{CN}\left(0,1\right)$) and $\mathbf{H}$
is i.i.d. for each channel use. Let $\mathbf{Y}\in\mathbb{C}^{L_{R}\times1}$
be the received signal and $\mathbf{W}\in\mathbb{C}^{L_{R}\times1}$
be the Gaussian noise, then\[
\mathbf{Y}=\mathbf{HPX}+\mathbf{W},\]
where $E\left[\mathbf{W}\mathbf{W}^{\dagger}\right]=\mathbf{I}_{L_{R}}$.
We also assume that there is a beamforming codebook $\mathcal{B}=\left\{ \mathbf{P}_{i}\in\mathbb{C}^{L_{T}\times s}:\right.$
$\left.\mathbf{P}_{i}^{\dagger}\mathbf{P}_{i}=\mathbf{I}_{s}\right\} $
declared to both the transmitter and the receiver before the transmission.
At the beginning of each channel use, the channel state $\mathbf{H}$
is perfectly estimated at the receiver. A message, which is a function
of the channel state, is sent back to the transmitter through a feedback
channel. The feedback is error-free and rate limited. According to
the channel state feedback, the transmitter chooses an appropriate
beamforming matrix $\mathbf{P}_{i}\in\mathcal{B}$. Let the feedback
rate be $R_{\mathrm{fb}}$bits/channel use. Then the size of the beamforming
codebook $\left|\mathcal{B}\right|\leq2^{R_{\mathrm{fb}}}$. The feedback
function is a mapping from the set of channel state into the beamforming
matrix index set, $\varphi:\;\left\{ \mathbf{H}\right\} \rightarrow\left\{ i:\;1\leq i\leq\left|\mathcal{B}\right|\right\} $.
This section will quantify the corresponding information rate \[
\mathcal{I}=\underset{\mathcal{B}:\left|\mathcal{B}\right|\leq2^{R_{\mathrm{fb}}}}{\max}\underset{\varphi}{\max}\;\mathrm{E}\left[\log\left|\mathbf{I}_{L_{R}}+P_{\mathrm{on}}\mathbf{H}\mathbf{P}_{\varphi\left(\mathbf{H}\right)}\mathbf{P}_{\varphi\left(\mathbf{H}\right)}^{\dagger}\mathbf{H}\right|\right],\]
where $P_{\mathrm{on}}=\rho/s$ and $\rho$ is the average received
SNR.

\begin{figure}
\includegraphics[clip,scale=0.55]{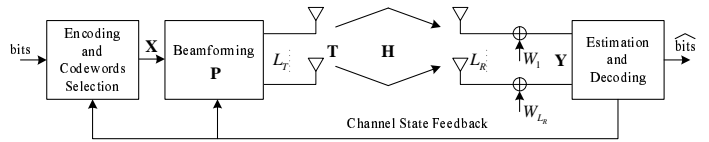}

\caption{\label{cap:System-model}System model}
\end{figure}

Before discussing the finite rate feedback case, we consider the case
that the transmitter has full knowledge of the channel state $\mathbf{H}$.
In this setting, the optimal beamforming matrix is given by $\mathbf{P}_{\mathrm{opt}}=\mathbf{V}_{s}$
where $\mathbf{V}_{s}\in\mathbb{C}^{L_{T}\times s}$ is the matrix
composed by the right singular vectors of $\mathbf{H}$ corresponding
to the largest $s$ singular values \cite{Dai_05_Power_onoff_strategy_design_finite_rate_feedback}.
The corresponding information rate is \begin{equation}
\mathcal{I}_{\mathrm{opt}}=\mathrm{E}_{\mathbf{H}}\left[\sum_{i=1}^{s}\mathrm{ln}\left(1+P_{\mathrm{on}}\lambda_{i}\right)\right],\label{eq:I_perfect_beamforming}\end{equation}
where $\lambda_{i}$ is the $i^{\mathrm{th}}$ largest eigenvalue
of $\mathbf{H}\mathbf{H}^{\dagger}$. In \cite{Dai_05_Power_onoff_strategy_design_finite_rate_feedback},
we derive an asymptotic formula to approximate a quantity of the form
$\mathrm{E}_{\mathbf{H}}\left[\sum_{i=1}^{s}\ln\left(1+c\lambda_{i}\right)\right]$
where $c>0$ is a constant. Apply the asymptotic formula in \cite{Dai_05_Power_onoff_strategy_design_finite_rate_feedback}.
$\mathcal{I}_{\mathrm{opt}}$ can be well approximated.

The effect of finite rate feedback can be characterized by the quantization
bounds in the Grassmann manifold. For finite rate feedback, we define
a suboptimal feedback function\begin{equation}
i=\varphi\left(\mathbf{H}\right)\triangleq\underset{1\leq i\leq\left|\mathcal{B}\right|}{\arg\ \min}\; d_{c}^{2}\left(\mathcal{P}\left(\mathbf{P}_{i}\right),\mathcal{P}\left(\mathbf{V}_{s}\right)\right),\label{eq:feedback-fn-suboptimal}\end{equation}
where $\mathcal{P}\left(\mathbf{P}_{i}\right)$ and $\mathcal{P}\left(\mathbf{V}_{s}\right)$
are the planes in the $\mathcal{G}_{L_{T},s}\left(\mathbb{C}\right)$
generated by $\mathbf{P}_{i}$ and $\mathbf{V}_{s}$ respectively.
In \cite{Dai_05_Power_onoff_strategy_design_finite_rate_feedback},
we show that this feedback function is asymptotically optimal as $R_{\mathrm{fb}}\rightarrow+\infty$
and near optimal when $R_{\mathrm{fb}}<+\infty$. With this feedback
function and assuming that the feedback rate $R_{\mathrm{fb}}$ is
large, it has been shown in \cite{Dai_05_Power_onoff_strategy_design_finite_rate_feedback}
that \begin{eqnarray}
\mathcal{I} & \approx & \mathrm{E}_{\mathbf{H}}\left[\sum_{i=1}^{s}\ln\left(1+\eta_{\sup}P_{\mathrm{on}}\lambda_{i}\right)\right],\label{eq:I_finite_feedback}\end{eqnarray}
where \begin{eqnarray}
\eta_{\sup} & \triangleq & 1-\frac{1}{s}\underset{\mathcal{B}:\left|\mathcal{B}\right|\leq2^{R_{\mathrm{fb}}}}{\inf}\;\mathrm{E}_{\mathbf{V}_{s}}\left[\underset{1\leq i\leq\left|\mathcal{B}\right|}{\ \min}\; d_{c}^{2}\left(\mathcal{P}\left(\mathbf{P}_{i}\right),\mathcal{P}\left(\mathbf{V}_{s}\right)\right)\right]\nonumber \\
 & = & 1-\frac{1}{s}D^{*}\left(2^{R_{\mathrm{fb}}}\right).\label{eq:PEF-DRF}\end{eqnarray}
Thus, the difference between perfect beamforming case (\ref{eq:I_perfect_beamforming})
and finite rate feedback case (\ref{eq:I_finite_feedback}) is quantified
by $\eta_{\sup}$, which depends on the distortion rate function on
the $\mathcal{G}_{L_{T},s}\left(\mathbb{C}\right)$. Substitute quantization
bounds (\ref{eq:DRF_bounds}) into (\ref{eq:PEF-DRF}) and apply the
asymptotic formula in \cite{Dai_05_Power_onoff_strategy_design_finite_rate_feedback}
for $\mathrm{E}_{\mathbf{H}}\left[\sum_{i=1}^{s}\ln\left(1+c\lambda_{i}\right)\right]$.
Approximations to the information rate $\mathcal{I}$ are derived
as functions of the feedback rate $R_{\mathrm{fb}}$. 

\begin{figure}
\includegraphics[clip,scale=0.5]{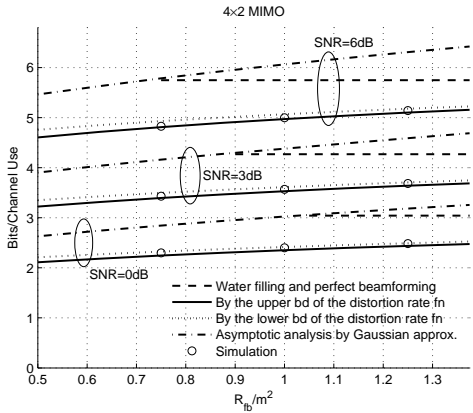}

\caption{\label{cap:Fig-Honig}Performance of finite size beamforming codebook.}
\end{figure}

Simulations verify the above approximations. Let $m=\min\left(L_{T},L_{R}\right)$.
Fig. \ref{cap:Fig-Honig} compares the simulated information rate
(circles) and approximations as functions of $R_{\mathrm{fb}}/m^{2}$.
The information rate approximated by the lower bound (solid lines)
and the upper bound (dotted lines) in (\ref{eq:DRF_bounds}) are presented.
As a comparison, we also include another performance approximation
(dash-dot lines) proposed in \cite{Honig_MiliComm03_Asymptotic_MIMO_Limited_Feedback},
which is based on asymptotic analysis and Gaussian approximation.
The simulation results show that the performances approximated by
the bounds (\ref{eq:DRF_bounds}) match the actual performance almost
perfectly and are much more accurate than the one in \cite{Honig_MiliComm03_Asymptotic_MIMO_Limited_Feedback}.

\section{Conclusion\label{sec:Conclusion}}

This paper considers the quantization problem on the Grassmann manifold.
Based on the explicit volume formula for a metric ball in the $\mathcal{G}_{n,p}\left(\mathbb{L}\right)$,
the corresponding Gilbert-Varshamov and Hamming bounds are obtained.
Assuming the uniform source distribution and the distortion defined
by the squared chordal distance, the distortion rate function is characterized
by establishing tight lower and upper bounds. As an application of
these results, the information rate of a MIMO system with finite rate
channel state feedback is accurately quantified for abitrary finite
number of antennas for the first time.

\bibliographystyle{IEEEtran}
\bibliography{bib/_Blum,bib/_Heath,bib/_Liu_Dai,bib/_love,bib/_Rao,bib/_Tse,bib/FeedbackMIMO_append,bib/MIMO_basic,bib/RandomMatrix}

\end{document}